\begin{document}
{\renewcommand{\thefootnote}{\fnsymbol{footnote}}
\begin{center}
{\LARGE  Monopole star products are non-alternative}\\
\vspace{1.5em}
Martin Bojowald,$^1$\footnote{%Corresponding author. 
e-mail address: {\tt
    bojowald@gravity.psu.edu}.}
 Suddhasattwa Brahma,$^{1,2}$\footnote{e-mail address: {\tt suddhasattwa.brahma@gmail.com}} Umut B\"{u}y\"{u}k\c{c}am,$^1$\footnote{e-mail address: {\tt uxb101@psu.edu}} and Thomas Strobl$^3$\footnote{e-mail address: {\tt strobl@math.univ-lyon1.fr}}\\
\vspace{0.5em}
$^1$ Institute for Gravitation and the Cosmos,\\
The Pennsylvania State
University,\\
104 Davey Lab, University Park, PA 16802, USA\\
\vspace{0.5em}
$^2$ Center for Field Theory and Particle Physics,\\
Fudan University,\\
200433 Shanghai, China\\
\vspace{0.5em}
$^3$ Institut Camille Jordan, Universit\'e Claude Bernard Lyon 1,\\
43 boulevard du 11 novembre 1918, 69622 Villeurbanne cedex, France
\vspace{1.5em}
\end{center}
}
\newtheorem{lemma}{Lemma}
\newtheorem{defi}{Definition}
\newtheorem{theo}{Theorem}
\newcommand{\proofend}{\raisebox{1.3mm}{\fbox{\begin{minipage}[b][0cm][b]{0cm}
\end{minipage}}}}
\newenvironment{proof}[1][]{\noindent{\it Proof\ifthenelse{\equal{#1}{}}{}{ (#1)}:} }{\mbox{}\hfill \proofend\\\mbox{}}
\newenvironment{ex}[1][]{\medskip\noindent{\it
    Example\ifthenelse{\equal{#1}{}}{}{ (#1)}:} }{\mbox{}\hfill
  \proofend\\\mbox{}\medskip}

\setcounter{footnote}{0}

\begin{abstract}
  Non-associative algebras appear in some quantum-mechanical systems, for
  instance if a charged particle in a distribution of magnetic monopoles is
  considered. Using methods of deformation quantization it is shown here, that
  algebras for such systems cannot be alternative, i.e.~their associator cannot be completely anti-symmetric.
\end{abstract}

Keywords: Deformation quantization, star product, non-associative quantum
mechanics, monopoles

MSC: 53D55,  %	Deformation quantization, star products
	17D99 % 	None of the above, but in this section (Nonassociative rings and algebras)
\section{Introduction}

Deformation quantization \cite{DefQuant1,DefQuant2} has been explored much in
the associative setting. If one drops the condition that the star product be
associative, some of the usual methods are no longer available. The
classification of such star products therefore remains open. In this paper, we
present one general result in this direction, motivated by a recent resurgence
of interest in magnetic-monopole systems
\cite{NonGeoNonAss,MSS1,BakasLuest,MSS2,MSS3,WeylStar}, where standard
quantization methods show that associative algebras cannot constitute
consistent quantizations of the relevant observables \cite{JackiwMon,Malcev}.

In the original version of deformation quantization, associativity of the star
product represents an important condition on the coefficients in the formal
power series of the product. If one works with star products without the
condition of associativity, at first sight it may seem easier to find
acceptable versions because they may appear to be subject to fewer consistency
requirements. However, if one is forced to use a non-associative star product
for physical reasons, one is not fully liberated from imposing conditions on
the associator
\begin{equation} \label{DefAsso}
[a,b,c]=a*(b*c) - (a*b)*c\,.
\end{equation}
For a specific set of basic observables, the associator, like the usual
commutator
\begin{equation} \label{DefKomm}
 [a,b] = a*b-b*a\,,
\end{equation}
is prescribed based on physical arguments.

Formulated for position and momentum components as basic observables, the
commutator of an acceptable star product should be
$[q_i,p_j]=i\hbar\{q_i,p_j\}=i\hbar \delta_{ij}$, mimicking the Poisson
bracket. If these are coordinates of a charged particle (with electric charge
$e$) moving in the magnetic field $B^l(q_i)$ of a magnetic monopole
distribution, so that ${\rm div}B=\partial_lB^l\not=0$, the classical brackets
are modified: They are twisted Poisson brackets for which the Jacobi identity
does not hold \cite{TopoTwist,WZWTwist,Twisted}.  An algebra that quantizes
the bracket endows phase-space functions with a new product $\star$ and the
associated commutator (\ref{DefKomm}) and associator (\ref{DefAsso}). The
Jacobiator of the commutator is proportional to the totally antisymmetric part
of the associator and can be non-zero for non-associative $\star$-products.
In the present context, one is led to the relations \cite{JackiwMon,Malcev}
\begin{eqnarray}
 [q_i,q_j]&=& 0 \label{qq}\\
 \left [q_i,p_j\right ]&=& i \hbar \delta_{ij} \\
 \left [p_i,p_j\right]&=&i \hbar  e \epsilon_{ijk} B^k \label{pp}  \\
\left[ q_i,x^I,x^J\right] &=& 0 \\
\left [ p_i,p_j, p_k \right ] &=& -\hbar^2 e \epsilon_{ijk}
 \partial_lB^l \label{Asso}
\end{eqnarray}
to be realized by a star product. Here $(x^I)_{I=1}^6$ is a collective
notation for the Cartesian coordinates $(q_i,p_i)_{i=1}^3$.  In the absence of
a magnetic charge density, one can introduce a canonical momentum $\pi_i$ with
zero brackets for its components. However, the definition, $\pi_i := p_i +
A_i$, makes use of a vector potential $A$ through $B={\rm rot}A$, which does
not exist if ${\rm div}B$ does not vanish. Instead of a zero associator in
standard star products, the specific form of (\ref{Asso}) imposes restrictions
on acceptable star products for magnetic-monopole systems.

Most of the usual properties of quantum mechanics are no longer valid and must
be modified when observables cannot be represented as associative operators on
a Hilbert space. In some studies, a weaker condition given by an alternative
algebra has been found advantageous \cite{OctQM,NonAss,NonAssEffPot}---if it
can be realized. An alternative algebra is one where the associator
(\ref{DefAsso}) is completely antisymmetric, or, equivalently, where the
$*$-product obeys
\begin{eqnarray}
a*(a*b)&=&(a*a)*b\nonumber\\
(a*b)*b&=&a*(b*b)
\end{eqnarray}
for any $a,b$ in the algebra. Many well-known non-associative algebras are of
this form, such as the octonionic ones. Requiring an algebra to be
alternative, provides a priori a tempting option for the case of a charged
particle in the background of magnetic monopoles, in particular in view of the
total anti-symmetry of the basic relation (\ref{Asso}).

However, in this report we demonstrate the impossibility of such an algebra as
a set of quantized observables of a charged particle in the presence of
magnetic monopole densities, obtained by deformation quantization.  While
(\ref{Asso}) implies a totally antisymmetric associator for linear functions
of the basic observables, the associator of general algebra elements is not
guaranteed to be totally antisymmetric.  Different examples for algebras
consistent with the relations (\ref{qq})--(\ref{Asso}) have been constructed
using star products \cite{NonGeoNonAss,MSS1,BakasLuest,MSS2,MSS3,WeylStar},
one of which has explicitly been shown to be non-alternative \cite{Schupp}. In
what follows, we will analyze the possibility of alternative monopole star
products in general terms, using deformation theory, the basics of which we
first recall in the next section.

\section{Deformation quantization with non-associativity}

The classical theory is described by the commutative algebra of smooth
functions on $T^*\mathbb{R}^3$, equipped with the bivector field\footnote{We
  set the electric charge to $e=1$ from now on.}
\begin{equation} \label{Pi}
\Pi = \left(\frac{\partial}{\partial q_i} + \epsilon_{jik}B^k(q)
  \frac{\partial}{\partial p_j}\right) \wedge \frac{\partial}{\partial p_i}
\, ,
\end{equation}
in the canonical linear coordinates $(x^I)_{I=1}^6\equiv
(q_1,q_2,q_3,p_1,p_2,p_3)$. For a vector field $B$ with non-vanishing
divergence, this is only a twisted Poisson bivector: Its Schouten bracket with
itself does not vanish but is given by
\begin{equation}\label{PiPi}
\frac{1}{2}[\Pi,\Pi] = \Pi^\sharp(H)
\end{equation}
where the 3-form $H$ takes the form
\begin{equation}
H= \pi^* \mathrm{d} B \, .
\end{equation}
Here the magnetic field $B$ is considered a 2-form on $\mathbb{R}^3$ by means
of $B=\epsilon_{ijk} B^i \mathrm{d} q^j \wedge \mathrm{d} q^k$ and $\pi \colon
T^*\mathbb{R}^3 \to \mathbb{R}^3$ is the canonical projection. Maxwell's
equations link $\mathrm{d}B$ directly to the magnetic monopole density:
$\mathrm{d}B= \ast \rho_{\rm magnetic}$.

The bivector field $\Pi$ then induces the following bracket on the functions
$f,g \in C^\infty(T^*\mathbb{R}^3)$,
\begin{equation}
\{ f,g\} = \frac{1}{2} \Pi^{IJ}(x) \frac{\partial f}{\partial x^I}
\frac{\partial g}{\partial x^J} \, .
\end{equation}
This bracket is an antisymmetric bi-derivation, but no longer a Lie bracket
and thus not a Poisson bracket: the r.h.s.~of (\ref{PiPi}) provides
precisely the non-zero Jacobiator.

\subsection{Star product}

Deformation quantization turns the classical commutative algebra
$(C^\infty(T^*\mathbb{R}^3),\cdot)$ into the quantum algebra ${\cal A} :=
(C^\infty(T^*\mathbb{R}^3)[[\lambda]],\star)$, where $\lambda=\frac{1}{2}i\hbar$
is considered as a formal deformation or expansion parameter:
\begin{eqnarray} \label{star}
 f\star g = \sum_{j=0}^{\infty} \lambda^j B_j(f,g)\,.
\end{eqnarray}
Here $B_j\colon {\cal A}\times{\cal A}\to {\mathbb C}$ are bilinear maps on
$\mathcal{A}$.\footnote{Using the same letter for these bilinear maps and the
  magnetic field should not cause confusion.} To zeroth order in $\lambda$, we
have the classical product given by pointwise multiplication, $B_0(f,g)=f\cdot
g \equiv fg$. Following \cite{DefQuant1}, we will assume that $B_j$ is a
bi-differential operator of maximum degree $j$ which is zero on constants for
strictly positive $j$:
\begin{eqnarray}
B_j(f,g) &=& \sum_{k,l =1}^j B_j^{k,l}(f,g) \qquad \mathrm{for} \quad j\geq 1\\
 B_j^{k,l}(f,g) &=& \sum_{I_1, \ldots, I_k,J_1,\ldots, J_l=1}^6 B_{j;I_1,
   \ldots, I_k,J_1,\ldots, J_l}^{k,l}(q)
\frac{\partial^k f}{\partial x^{I_1} \cdots \partial x^{I_k}} \frac{\partial^l
  g}{\partial x^{J_1} \cdots \partial x^{J_l}}
\end{eqnarray}
The property implies in particular that the star product defines a unital
algebra, with the unit function as unit.

Let us for a moment assume that $\star$ would be associative. In this case, we
would have that the commutator (\ref{DefKomm}) evidently satisfies the Jacobi
identity and also that $[f, g \star h] = [f,g] \star h + g \star [f,h]$. Both
equations together, evaluated at lowest non-vanishing order in $\lambda$, imply
that the antisymmetric part $B^-_1(f,g)=\frac{1}{2}(B_1(f,g)-B_1(g,f))$ of
$B_1(f,g)$ is a Poisson bivector. On the other hand, for physical reasons, we
want that the antisymmetric part of the first order deformation is determined
by the classical bracket:
\begin{equation}\label{B1-}
 B_1^-(f,g)=\{f,g\} \, .
\end{equation}
This then shows that the $\star$-product cannot be associative for the
deformation quantization of the above classical system, cf., in particular,
Eq.~(\ref{PiPi})---as anticipated already in the Introduction.

In fact, in the present article, we want to strengthen eq.~(\ref{B1-}) in a
two-fold way: First, we require in addition that $B_1$ is antisymmetric itself
already, so that
\begin{equation} \label{B1}
 B_1(f,g)=\{f,g\}\, .
\end{equation}
This, in fact, is not really a restriction: it can be shown that every star
product either satisfies this condition or has an equivalent deformation for
which (\ref{B1}) is fulfilled. We will come back to this below and assume it
for now in any case. Second, we want that for linear coordinate functions on
$T^*\mathbb{R}^3$ the bracket determines the commutator even to
next-to-leading order, i.e.~we require
\begin{equation} \label{Dist}
 \frac{x^I \star x^J -x^J \star x^I}{i\hbar}=\{x^I,x^J\}+O(\hbar^2) \, .
\end{equation}
The first condition is equivalent to requiring $B^+_1(f,g)=0$ for all
functions $f,g$, the second one to demanding
\begin{equation} \label{B2-x}
B_2^-(x^I,x^J)=0 \, .
\end{equation}
We remark in parenthesis that the equation (\ref{Dist}) is implied if the
$x^I$ are implemented as distinguished observables in the sense of
\cite{DefQuant2}.

\subsection{Monopole star products}

Since we found above that the associator of the monopole star product cannot
be zero, we also expand it into a formal power series in $\lambda$:
\begin{eqnarray} \label{Afgh}
 A(f, g, h)= f\star(g\star h) - (f\star g)\star h := \sum_{j=0}^\infty \lambda^j
 A_j(f, g, h)\,.
\end{eqnarray}
The maps $B_i$ and $A_j$ are not independent; in fact, $A_j$ is determined by
the $B_i$ with $i\leq j$. It is easy to evaluate the low orders: We always
have $A_0=0$, because the point-wise multiplication of phase-space functions is
associative. At first order, we have
\begin{equation} \label{A1}
  A_1(f,g,h) = fB_1(g,h)-B_1(f,g)h+B_1(f,gh)-B_1(fg,h)=0
\end{equation}
simply since $B_1$ is bi-differential of order $(1,1)$.

At second order, one finds
\begin{eqnarray} \label{A2B}
A_2(f,g,h) &=&
  fB_2(g,h)-B_2(f,g)h+B_2(f,gh)-B_2(fg,h)\nonumber\\
&& + B_1(f,B_1(g,h))- B_1(B_1(f,g),h)\,.
\end{eqnarray}
For a non-associative star product, the coefficient $A_2$, as the first
non-zero one in the expansion (\ref{Afgh}), plays a role similar to the
coefficient $B_1$ in specifying conditions on the star product as a
quantization of the classical bracket. The totally antisymmetric contribution
\begin{eqnarray} \label{anti} A_2^-(f, g, h)&:=&\frac{1}{6}\left(A_2(f, g,
    h)+A_2(h, f,
    g)+A_2(g, h, f)\right.\nonumber\\
  & &\left.\,\,\,\,-A_2(f, h, g)-A_2(g, f, h)-A_2(h, g, f)\right)\nonumber
\end{eqnarray}
to $A_2$, in view of (\ref{B2-x}), only depends on $B_1$ \emph{if} it is
evaluated on linear functions of the basic variables $x^I$: We have
\begin{equation}
    A_2^-(x^I,x^J,x^K)= \frac{1}{2}J(x^I,x^J,x^K)
\end{equation}
where $J(f,g,h)$ is the Jacobiator of $B_1$, i.e.~of the classical bracket $\{
\cdot, \cdot \}$.  In particular, $A_2^-(p_1,p_2,p_3)= 4\pi^*\mathrm{d} B$ for
a star product that quantizes a twisted Poisson bivector obeying
(\ref{PiPi}). It is then consistent to assume that
$A_2(p_1,p_2,p_3)=A_2^-(p_1,p_2,p_3)$ is totally antisymmetric, as written in
the basic relation (\ref{Asso}). The basic relations do not give us direct
statements about $A_2$ evaluated on functions not linear in the global
coordinates $x^I$. We will assume that $A_2(f,g,h)$ can be chosen totally
antisymmetric even in this case --- since our aim is to prove that monopole
star products cannot be alternative, there would be nothing to show if this
assumption were violated.  However, this condition does not already imply that
the star product is alternative, since non-linear functions generically lead
to contributions to $A(f,g,h)$ of higher order in $\lambda$, which do not
directly follow from simple combinations of the basic relations (\ref{Asso}).

We summarize our conditions on $A_2$ in
\begin{defi}\label{Defmono}
  A {\em monopole star product} is a non-associative star product $\star$ on
  $C^\infty(T^*\mathbb{R}^3)[[\lambda]]$ such that (\ref{Dist}) holds, its
  associator to second order in $\lambda$ is totally antisymmetric and further
  obeys the following conditions:
 \begin{enumerate}
 \item $A_2(p_1,p_2,p_3)\not=0$,
 \item $A_2(q_i,x^I,x^J)=0$ for all $i=1,2,3$ and $I,J=1,\ldots,6$, and
 \item $B_1(q_i,A_2(p_1,p_2,p_3))=0$ for $i=1,2,3$.
\end{enumerate}
where $(x^I)_{I=1}^6=(q_1,q_2,q_2,p_1,p_2,p_3)$ are the canonical linear
coordinates on $T^*\mathbb{R}^3$.
\end{defi}

\subsection{Hochschild cohomology}

For an associative algebra ${\cal A}$, the space of multilinear maps from
${\cal A}$ to itself can be equipped with a coboundary operator ${\rm d}$,
used in Hochschild cohomology. For a multilinear map $\phi\colon {\cal
  A}^{\otimes n}\to {\cal A}$ of $n$ arguments, ${\rm d}\phi$ is a multilinear
function of $n+1$ arguments given by
\begin{eqnarray}
 {\rm d}\phi(a_0,a_1,\ldots,a_n)&=& a_0\cdot \phi(a_1,\ldots,a_n)+ \sum_{j=0}^{n-1}
 (-1)^j \phi(a_0,\ldots, a_{j-1}, a_j\cdot a_{j+1},a_{j+2},\ldots,a_n)\nonumber\\
 &&+ (-1)^{n}
 \phi(a_0,\ldots,a_{n-1})\cdot a_n\,. \label{dphi}
\end{eqnarray}

Hochschild cohomology plays an important role in classifying equivalent star
products with respect to a redefinition of higher orders in a
$\lambda$-expansion: If
\begin{equation}
 D(f)=\sum_{j=0}^{\infty} D_j(f)\lambda^j
\end{equation}
with linear differential operators $D_j$ starting with $D_0 = \mathrm{id}$,
for any given star product $\star$ a new product $\star'$ can be defined by
means of
\begin{equation}
 D(f)\star' D(g) = D(f\star g)\,.
\end{equation}
The condition on $D_0$ ensures that $D$ is invertible as a map on formal power
series. If functions in $C^{\infty}(M)$ are written as symbols of operators,
for instance by a Weyl correspondence, a non-trivial map $D$ changes the
factor-ordering choice in the correspondence.  To first order, $B_1'=B_1-{\rm
  d}D_1$ while ${\rm d}B_1=0$; see (\ref{A1}). The first Hochschild cohomology
therefore classifies inequivalent choices of $B_1$ which cannot be related by
a different choice of factor ordering. For a given bracket $\{\cdot,\cdot\}$,
all star products quantizing it respect the condition (\ref{B1-}), but not
necessarily (\ref{B1}).

If ${\cal A}$ is not associative, $\not\!{\rm d}$, defined just like ${\rm d}$
for an associative algebra, is not a coboundary operator: For a linear
function $\phi\colon {\cal A}\to {\cal A}$, we have
\begin{equation}
 \not\!{\rm d}\phi(a_0,a_1) =
 a_0\star \phi(a_1)-\phi(a_0\star a_1)+ \phi(a_0)\star a_1
\end{equation}
and
\begin{equation}
  {\not\!{\rm d}}^2\phi(a_0,a_1,a_2) = A(a_0,a_1,\phi(a_2))+
  A(a_0,\phi_0(a_1),a_2)+
  A(\phi(a_0),a_1,a_2)- \phi_0(A(a_0,a_1,a_2))
\end{equation}
with the associator $A$. Therefore, Hochschild cohomology is not available for
non-associative algebras. However, the coboundary operator ${\rm d}$ of the
classical associative commutative algebra of smooth functions may still be
used in constructing non-associative deformations, as we will do below. For
instance, the product in (\ref{A1}) refers to $\cdot$, not to
$\star$. Moreover, we can refer to the standard argument \cite{DS} for
changing the star product within its equivalence class to show that the
symmetric part in $B_1$ can always be set to zero and (\ref{B1}) be
achieved. Thus, up to operator ordering, we can always assume that $B_1$ is
given by the classical bracket, even if it is not Poisson, but for example
twisted Poisson as here.

\section{The main result}

Our main result is
\begin{theo} \label{Theorem}
Let $\star$ be a monopole star product as defined
  above, cf.~Definition \ref{Defmono}. Then the associator $A(f,g,h)\equiv
  f\star(g\star h)-(f\star g)\star h$ cannot be totally antisymmetric in its
  arguments.
\end{theo}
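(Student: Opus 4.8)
\medskip
\noindent\emph{Strategy of the proof.} The plan is to argue by contradiction. Suppose $A(f,g,h)$ is totally antisymmetric for all arguments; then every coefficient $A_j$ in (\ref{Afgh}) is totally antisymmetric, i.e.\ the algebra is alternative. The only identity available for a non-associative $\star$ is the one obeyed by the associator of any bilinear product,
\[
 f\star A(g,h,k)-A(f\star g,h,k)+A(f,g\star h,k)-A(f,g,h\star k)+A(f,g,h)\star k=0 ,
\]
obtained by rewriting the two extreme parenthesisations of $f\star g\star h\star k$ into one another. Expanding it in $\lambda$ and using $A_0=A_1=0$: at order $\lambda^2$ it says that $A_2$ is a Hochschild $3$-cocycle for the undeformed product, ${\rm d}A_2=0$; at order $\lambda^3$ it gives ${\rm d}A_3(f,g,h,k)=A_2(B_1(f,g),h,k)-A_2(f,B_1(g,h),k)+A_2(f,g,B_1(h,k))-B_1(f,A_2(g,h,k))-B_1(A_2(f,g,h),k)$, with $B_1(\cdot,\cdot)=\{\cdot,\cdot\}$ the twisted Poisson bracket attached to (\ref{Pi}).

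The decisive move is to exploit that $A_3$ must also be totally antisymmetric. Setting $h=g$ in the order-$\lambda^3$ relation, antisymmetry of $A_2$ and $A_3$ removes the terms with a repeated argument, leaving ${\rm d}A_3(f,g,g,k)=-A_3(fg,g,k)+A_3(f,g^2,k)-A_3(f,g,gk)$ on the left and $A_2(\{f,g\},g,k)+A_2(f,g,\{g,k\})$ on the right. A short check with the antisymmetry relations shows that the left-hand side changes sign under $f\leftrightarrow k$ while the right-hand side is invariant; hence both vanish, which yields the identity
\[
 A_2(\{f,g\},g,k)+A_2(f,g,\{g,k\})=0\quad\text{(for all $f,g,k$),}
\]
call it $(\dagger)$, an identity forced on $A_2$ by alternativity.

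The next step is to pin down $A_2$ itself. A totally antisymmetric poly-differential operator on $C^\infty(T^*\mathbb{R}^3)$ that is a Hochschild cocycle has order at most one in each slot, hence is a trivector field $X$; this is the Hochschild--Kostant--Rosenberg statement in the antisymmetric sector (a totally antisymmetric Hochschild coboundary vanishes, so the antisymmetric cocycles are exactly the multivector fields). Condition~2 of Definition~\ref{Defmono}, $A_2(q_i,x^I,x^J)=0$, kills every component of $X$ carrying a $q$-index, so $X=\mu\,\partial_{p_1}\wedge\partial_{p_2}\wedge\partial_{p_3}$ with $\mu=A_2(p_1,p_2,p_3)$; and since $A_2$ is totally antisymmetric it equals $A_2^-$, so by the basic relation (\ref{Asso}) $\mu$ is (proportional to) the magnetic charge density, a function of $q$ alone, which by condition~1 is not identically zero.

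Finally one evaluates $(\dagger)$ on quadratic momenta. With $f=p_1^2$, $g=p_2$, $k=p_3$ one has $\{p_1^2,p_2\}=2p_1\{p_1,p_2\}=2p_1B^3$ and $\{p_2,p_3\}=B^1$; since $X$ differentiates only in the momenta and $B^1$ depends only on $q$, $(\dagger)$ becomes $A_2(2p_1B^3,p_2,p_3)+A_2(p_1^2,p_2,B^1)=2\mu B^3+0=0$, and cycling the indices gives $\mu B^k=0$ for $k=1,2,3$. On the open set where $\mu\neq0$ this forces $B\equiv0$, hence ${\rm div}\,B=0$ there, hence $\mu=0$ there; so $\mu\equiv0$, contradicting condition~1. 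The step I expect to be the main obstacle is the derivation of $(\dagger)$: it requires the order-$\lambda^3$ obstruction relation together with the precise symmetry bookkeeping that comes from demanding $A_3$ be totally antisymmetric. Once $(\dagger)$ and the identification of $A_2$ with a trivector field are in place, the concluding substitution is immediate.
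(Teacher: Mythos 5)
Your proposal is correct, and it reaches the contradiction by a route that genuinely differs from the paper's in its second half. You share the paper's key identity: your order-$\lambda^3$ expansion of the pentagon relation is exactly (\ref{dA3}), i.e.\ the content of Lemma~\ref{lemma3}. But where the paper first shows $A_3=0$ under alternativity (Lemma~\ref{lemma1}: flexibility at second order forces $B_2$ symmetric; Lemma~\ref{lemma2}: symmetric $B_2$ forces $A_3^-=0$) and then exhibits explicit quadruples $(f,g,h,k)$ with $O\neq0$, you never need $B_2$ symmetric or $A_3=0$ at all: setting $h=g$ in (\ref{dA3}) and playing the $f\leftrightarrow k$ parity of the two sides against each other (left side odd, right side even, using only total antisymmetry of $A_2$ and $A_3$) yields the clean constraint $A_2(\{f,g\},g,k)+A_2(f,g,\{g,k\})=0$, which you then refute on quadratic momenta. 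The price is the identification of $A_2$ with a trivector field: here you invoke the Hochschild--Kostant--Rosenberg/Vey result for polydifferential cocycles of $C^\infty(M)$, an external ingredient the paper's proof does not explicitly use; note that your parenthetical justification is compressed --- vanishing of totally antisymmetric coboundaries alone does not give the claim, one also needs the standard decomposition of any differential cocycle into a multivector field plus a coboundary --- but that is a citable classical result, so this is a presentational rather than a substantive gap. (In fact the paper's own verification that $O\neq0$ implicitly uses similar structural facts about $A_2$, e.g.\ that it annihilates arguments depending only on $q$, which your trivector identification makes explicit; also, like the paper's argument, yours relies on the standing normalization (\ref{B1}) so that $B_1$ is the antisymmetric twisted Poisson biderivation, and on (\ref{Dist}) to identify $A_2(p_1,p_2,p_3)$ with the Jacobiator, i.e.\ the monopole density.) What each approach buys: the paper's stays elementary and isolates a reusable obstruction $O$ built from $B_1$ and $A_2$ only, with a case split on constant versus non-constant density; yours dispenses with the case split and with Lemmas~\ref{lemma1}--\ref{lemma2}, trading them for HKR, and produces a structural identity that any alternative deformation with antisymmetric $A_2$ would have to satisfy, which may be of independent use. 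You also never use condition~3 of Definition~\ref{Defmono}, so your argument shows the conclusion holds under slightly weaker hypotheses.
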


We will prove this result by making use of three lemmas:
\begin{lemma}\label{lemma1}
  Let $\star$ be a star product obeying (\ref{Dist}).  If $\star$ is flexible
  at second order, that is $A_2(f,g,h)=-A_2(h,g,f)$, then $B_2$ is symmetric.
\end{lemma}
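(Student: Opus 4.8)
\medskip
\noindent\emph{Proof plan.} The plan is to reduce the flexibility hypothesis to a condition on $B_2$ alone and then to finish by combining the structure of the Hochschild cohomology of $C^\infty(T^*\mathbb{R}^3)$ with the normalization~(\ref{Dist}). First I would use the explicit expression~(\ref{A2B}) to write $A_2=\mathrm{d}B_2+C_2$, where $\mathrm{d}B_2(f,g,h):=fB_2(g,h)-B_2(f,g)h+B_2(f,gh)-B_2(fg,h)$ is the Hochschild coboundary of $B_2$ with respect to the commutative pointwise product (the same combination that appears for $B_1$ in~(\ref{A1})), and $C_2(f,g,h):=B_1(f,B_1(g,h))-B_1(B_1(f,g),h)$ is the part quadratic in $B_1$. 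Recalling that $B_1=\{\cdot,\cdot\}$ is antisymmetric, cf.~(\ref{B1}), applying antisymmetry of $B_1$ twice within each of the two terms gives $C_2(h,g,f)=-C_2(f,g,h)$; that is, the $B_1$-quadratic part of $A_2$ is automatically ``flexible'' and cancels out of the identity $A_2(f,g,h)=-A_2(h,g,f)$. Hence flexibility at second order is equivalent to $\mathrm{d}B_2(f,g,h)+\mathrm{d}B_2(h,g,f)=0$.

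Next, I would split $B_2=B_2^{+}+B_2^{-}$ into its symmetric and antisymmetric parts in the two arguments. Using only the (anti)symmetry of $B_2^{\pm}$ and the commutativity of the pointwise product, a term-by-term check shows $\mathrm{d}B_2^{+}(f,g,h)+\mathrm{d}B_2^{+}(h,g,f)=0$ identically, while $\mathrm{d}B_2^{-}(f,g,h)+\mathrm{d}B_2^{-}(h,g,f)=2\,\mathrm{d}B_2^{-}(f,g,h)$. Therefore the flexibility condition collapses to the single equation $\mathrm{d}B_2^{-}=0$: the antisymmetric part of $B_2$ is a Hochschild $2$-cocycle of $(C^\infty(T^*\mathbb{R}^3),\cdot)$.

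Finally, $B_2^{-}$ is moreover a bidifferential operator vanishing on constants, so the classical computation of the (differentiable) Hochschild cohomology of $C^\infty$ applies: every such $2$-cocycle equals a bivector field plus the coboundary of a differential operator. Since coboundaries of differential operators are symmetric in their two entries, an immediate consequence of the commutativity of $\cdot$, while $B_2^{-}$ is already antisymmetric, it follows that $B_2^{-}$ \emph{is} a bivector field, $B_2^{-}(f,g)=\frac{1}{2}\beta^{IJ}(x)\,\partial_I f\,\partial_J g$ with $\beta^{IJ}=-\beta^{JI}$. Evaluating on the linear coordinates $x^I$ and invoking~(\ref{Dist}) in its equivalent form $B_2^{-}(x^I,x^J)=0$ forces $\beta^{IJ}\equiv 0$, so $B_2^{-}=0$ and $B_2$ is symmetric. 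The step I expect to require the most care is the second one: the signs must conspire so that precisely the $B_1$-quadratic part and the symmetric part of $B_2$ are automatically flexible, which is what isolates the pure cocycle condition $\mathrm{d}B_2^{-}=0$; apart from that, the only non-elementary input is the identification of the second Hochschild cohomology of $C^\infty$ with bivector fields, which I would simply quote.
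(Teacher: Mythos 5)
Your proposal is correct, but it reaches the conclusion by a genuinely different middle step than the paper. The paper specializes flexibility to $f=h$: from (\ref{A2B}) it computes $A_2(f,g,f)=-2fB_2^-(f,g)+2B_2^-(f,fg)$ (the $B_1$-quadratic terms cancel there, using the standing assumption (\ref{B1}) that $B_1$ is antisymmetric, just as you use it), so flexibility yields the derivation-type identity $B_2^-(f,fg)=fB_2^-(f,g)$, which the paper then asserts can hold for an antisymmetric bidifferential operator only if its degree is $(1,1)$; condition (\ref{B2-x}), equivalent to (\ref{Dist}), then forces $B_2^-=0$. You instead keep the full flexibility condition, correctly observe that the $B_1\circ B_1$ part of $A_2$ and the symmetric part of $B_2$ are automatically flexible (both sign checks work out), reduce flexibility to the cocycle equation $\mathrm{d}B_2^-=0$ in the differentiable Hochschild complex of the commutative algebra, and then quote the classical computation of that cohomology (Vey/Hochschild--Kostant--Rosenberg, as used in \cite{DefQuant1}) to conclude that the antisymmetric cocycle $B_2^-$ is a bivector field, i.e.\ of degree $(1,1)$, before killing it with (\ref{B2-x}). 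The trade-off: the paper's argument is elementary and self-contained, though its key degree-$(1,1)$ claim is stated rather than derived; your argument replaces that claim by a standard, citable cohomological fact, at the cost of invoking heavier machinery, and it also makes explicit the structural point that flexibility at second order is exactly the statement that $B_2^-$ is a Hochschild $2$-cocycle. Both proofs use antisymmetry of $B_1$ and the bidifferential, vanishing-on-constants normalization of $B_2$, so no hypotheses beyond the paper's standing assumptions are smuggled in.
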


\begin{proof}
 We evaluate $A_2$ in (\ref{A2B}) on functions with $f=h$, writing the result as
\begin{eqnarray}
 A_2(f,g,f)&=&fB_2(g,f)-B_2(f,g)f+B_2(f,gf)-B_2(fg,f) \nonumber\\
 & =& -2fB_2^-(f,g)
 +2B_2^-(f,fg)
\end{eqnarray}
using the antisymmetric part $B_2^-(f,g):=\frac{1}{2}(B_2(f,g)-B_2(g,f))$ of
$B_2$. If $A_2(f,g,h)=-A_2(h,g,f)$ holds, $A_2(f,g,f)=0$, and we obtain
\begin{equation}
 B_2^-(f,fg)=fB_2^-(f,g)\,.
\end{equation}
For an antisymmetric bi-differential form, this equation can hold only if the
degree is $(1,1)$.  However, if $B_2^-$ has a contribution of degree $(1,1)$,
(\ref{B2-x}) cannot hold.  Therefore, $B_2^-=0$ and $B_2$ is symmetric.
\end{proof}

In particular, the conclusion holds for a monopole star product (\ref{star}).
All explicit star products that have been constructed for monopole systems
indeed have a symmetric $B_2$. For associative star products, Kontsevich's
formula \cite{Kontsevich} has the same property. If symmetry of $B_j$ holds at
all even orders $j$, the star product gives rise to a formal deformation of
the twisted Poisson bracket by powers of $\lambda^2$, or a Vey deformation as
defined in \cite{DefQuant1}.

\begin{lemma} \label{lemma2}
  If (\ref{star}) is a star product with symmetric $B_2$, then the totally
  anti-symmetric part of $A_3$ is equal to zero.
\end{lemma}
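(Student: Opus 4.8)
The plan is to express $A_3$ explicitly in terms of the $B_j$ (for $j\le 3$) by collecting the order-$\lambda^3$ terms in the associativity expansion of $f\star(g\star h)-(f\star g)\star h$, and then to apply the total antisymmetrization operator. The order-$\lambda^3$ coefficient has the schematic form
\begin{equation}
A_3(f,g,h)= \bigl(\not\!{\rm d}B_3\bigr)(f,g,h) + \bigl[B_1(f,B_2(g,h))-B_2(B_1(f,g),h)\bigr] + \bigl[B_2(f,B_1(g,h))-B_1(B_2(f,g),h)\bigr],
\end{equation}
where $\not\!{\rm d}$ is the Hochschild-type coboundary built from the \emph{commutative} product $\cdot$ (as used already in (\ref{A1}) and the discussion of Hochschild cohomology). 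First I would argue that $\not\!{\rm d}B_3$ contributes nothing to the totally antisymmetric part. The term $\not\!{\rm d}B_3(f,g,h)=fB_3(g,h)-B_3(fg,h)+B_3(f,gh)-B_3(f,g)h$ is, for fixed first/last slots, a coboundary; more directly, each of its four pieces carries a "free" undifferentiated factor ($f$, $h$) or a Leibniz-type combination, and the standard argument (the one behind $A_1=0$) shows that such a $\not\!{\rm d}$-exact cochain, being bi-differential of the appropriate order, has vanishing totally antisymmetric part. So the totally antisymmetric part of $A_3$ comes entirely from the "mixed" bracket terms $B_1\circ B_2$ and $B_2\circ B_1$.

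Next I would exploit the symmetry of $B_2$. Write $B_1=B_1^-$ (antisymmetric, equal to $\{\cdot,\cdot\}$ by (\ref{B1})) and $B_2=B_2^+$ (symmetric, by hypothesis). The four mixed terms assemble into two "composition" brackets of the type that appear in the Gerstenhaber/Hochschild product: $B_1\bullet B_2$ and $B_2\bullet B_1$, i.e. (up to signs) the operators $(f,g,h)\mapsto B_1(B_2(f,g),h)-B_1(f,B_2(g,h))$ and $(f,g,h)\mapsto B_2(B_1(f,g),h)-B_2(f,B_1(g,h))$. When one totally antisymmetrizes a composition of a symmetric bilinear map with an antisymmetric bilinear map, the result vanishes: intuitively, antisymmetrizing over the three slots forces an antisymmetrization over the two arguments fed into $B_2$ (respectively the two slots occupied by $B_2$'s output together with a spectator), and a symmetric map kills that. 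I would make this precise by a direct index/permutation count: the totally antisymmetric part of $B_2(B_1(f,g),h)$ is proportional to $\sum_{\sigma\in S_3}\mathrm{sgn}(\sigma)\,B_2(B_1(x_{\sigma(1)},x_{\sigma(2)}),x_{\sigma(3)})$, and pairing the two terms coming from the transposition of $\sigma(1)\leftrightarrow\sigma(2)$ shows the sum is $2\sum\mathrm{sgn}(\sigma)\,B_2^-(\dots)$-type, which is zero since $B_2=B_2^+$; and symmetrically for the $B_1\circ B_2$ term using $B_1=B_1^-$.

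The step I expect to be the main obstacle is handling $\not\!{\rm d}B_3$ cleanly: one must be careful that $B_3$ is only a \emph{differential} operator (not an arbitrary multilinear map), so "$\not\!{\rm d}$-exact implies antisymmetric part zero" is not literally a cohomological triviality but a statement about the symbol of a bi-differential operator of bounded order that is zero on constants. I would dispose of this exactly as the paper already did for $A_1$ in (\ref{A1}): expand $\not\!{\rm d}B_3(f,g,h)$ in the highest-derivative (symbol) terms, observe that the undifferentiated arguments $f$ and $h$ sit in fixed outer slots while the Leibniz-type middle terms $B_3(fg,h)$, $B_3(f,gh)$ reorganize into derivatives hitting $fg$ or $gh$, and note that after total antisymmetrization the contributions cancel in pairs because differentiating a product and antisymmetrizing reproduces the same structure with opposite sign. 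Once both pieces — the coboundary piece and the mixed-bracket piece — are shown to have vanishing totally antisymmetric part, we conclude $A_3^-=0$, which is the claim. I would close by remarking that this is exactly the input needed in the next stage of the argument: flexibility/alternativity of $\star$ at third order cannot be obstructed at this order, so the obstruction to alternativity must be pushed to $A_4$, where (as we will see) $B_2\circ B_2$ terms survive antisymmetrization and produce the contradiction.
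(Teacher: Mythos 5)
Your overall route is the same as the paper's: expand $A_3(f,g,h)={\rm d}B_3(f,g,h)+B_2(f,B_1(g,h))-B_2(B_1(f,g),h)+B_1(f,B_2(g,h))-B_1(B_2(f,g),h)$, argue that the coboundary piece does not contribute to the total antisymmetrization, and reduce the remaining terms to expressions containing $B_2^-$, which vanish by hypothesis. One remark on the ${\rm d}B_3$ step: none of the symbol/bi-differential care you anticipate is needed, and it is not ``the argument behind $A_1=0$'' (that one gives ${\rm d}B_1=0$ identically because $B_1$ is a derivation in each slot). For \emph{any} bilinear map $\phi$ over the commutative product one has $({\rm d}\phi)^-=0$ purely combinatorially: the middle terms $\phi(fg,h)$ and $\phi(f,gh)$ cancel pairwise under the transpositions exchanging the two multiplied arguments (since $fg=gf$), while the outer terms $f\phi(g,h)$ and $\phi(f,g)h$ cancel against each other after a cyclic (even) relabeling. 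This is exactly why, as the paper notes after Lemma~\ref{lemma2}, $A_3^-$ does not depend on $B_3$.

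The substantive problem is your ``make this precise'' step for the group in which $B_2$ is the \emph{outer} map. The totally antisymmetric part of $B_2(B_1(f,g),h)$ alone is \emph{not} zero for symmetric $B_2$ and antisymmetric $B_1$: pairing $\sigma$ with $\sigma\circ(12)$ acts on the two arguments of the inner $B_1$, and since $B_1$ is antisymmetric the paired terms add (giving a factor $2$) instead of producing a $B_2^-$. Concretely, with $B_2(f,g)=fg$ one gets $\mathrm{Alt}\bigl[B_2(B_1(f,g),h)\bigr]\propto \{f,g\}h-\{f,h\}g+\{g,h\}f$, which is generically non-zero. What is true, and what you need, is that the two placements cancel \emph{jointly}: a cyclic relabeling gives $\mathrm{Alt}\bigl[B_2(B_1(f,g),h)\bigr]=\mathrm{Alt}\bigl[B_2(B_1(g,h),f)\bigr]$, so the combination appearing in $A_3$ satisfies $\mathrm{Alt}\bigl[B_2(f,B_1(g,h))-B_2(B_1(f,g),h)\bigr]=2\,\mathrm{Alt}\bigl[B_2^-(f,B_1(g,h))\bigr]$, which vanishes for symmetric $B_2$; this is precisely the mechanism behind (\ref{A3final}). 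Your treatment of the other group ($B_1$ outer, $B_2$ inner) is fine, since there the antisymmetrization does force a transposition of $B_2$'s two arguments and kills each term individually. With this correction the proof closes and coincides with the paper's. A side remark: your concluding expectation that the obstruction to alternativity must be sought at $A_4$ does not match the paper, whose Lemma~\ref{lemma3} produces the obstruction already at order $\lambda^3$, via ${\rm d}A_3=O\neq 0$, so $A_3^-=0$ together with alternativity is contradicted at third order.
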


\begin{proof}
Using the definition of the associator and the star product, we derive
\begin{eqnarray}\label{A3}
A_3(f, g, h) &=& {\rm d}B_3(f, g, h) + B_2(f, B_1(g,h))\nonumber\\ &
&\hspace{3mm}-B_2(B_1(f,g), h) + B_1(f, B_2(g,h)) - B_1(B_2(f,g), h)\,,
\end{eqnarray}
where $\mathrm{d}$ is the coboundary operator of Hochschild cohomology,
cf.~eq.~(\ref{dphi}). In particular, ${\rm d}B_3(f,g,h) \equiv f B_3(g, h)
+B_3(f, gh) -hB_3(f,g) - B_3(fg, h)$. The totally anti-symmetric part $A_3^-$
of $A_3$, defined as in (\ref{anti}), is given by
\begin{eqnarray}\label{A3final}
3A_3^-(f, g, h)&=&B_2^-(f, 2B_1^-(g,h))+ B_2^-(h, 2B_1^-(f,g)) +
B_2^-(g, 2B_1^-(h,f)) \\
&& +B_1^-(f, 2B_2^-(g,h)) +B_1^-(f, 2B_2^-(g,h)) +B_1^-(f,
2B_2^-(g,h))\nonumber
\end{eqnarray}
where, as before, $B_j^-(f,g)=\frac{1}{2}\left(B_j(f,g) - B_j(g,f)\right)$ is
the antisymmetric part of $B_j$.\footnote{See App.~\ref{a:A3} for a detailed
derivation of (\ref{A3final}).} Since all terms on the right-hand side of
(\ref{A3final}) contain a $B_2^-$,
$B_2^-=0$ implies $A_3^-=0$.
\end{proof}
We remark that for the last conclusion it is important that the antisymmetric
part of $A_3$, unlike the full $A_3$, does not depend on $B_3$.

\begin{lemma} \label{lemma3}
 Let $\star$ be a star product such that
\begin{eqnarray}
O(f,g,h,k) &:=& A_2(f,g,B_1(h,k)) -A_2(f,B_1(g,h),k)
+A_2(B_1(f,g),h,k)\nonumber\\
&&+B_1(A_2(g,h,k),f) -B_1(A_2(f,g,h),k)
\end{eqnarray}
is not identically zero. Then the third-order contribution $A_3$ to the
associator is non-zero.
\end{lemma}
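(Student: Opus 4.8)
The plan is to obtain $A_3$ from $A_2$ and $B_1$ through a universal coherence identity for the associator, in the same spirit in which (\ref{A3}) was derived in the proof of Lemma~\ref{lemma2}. For \emph{any} bilinear product $\star$ on a vector space, associative or not, with associator $A(a,b,c):=a\star(b\star c)-(a\star b)\star c$, writing out the five parenthesisations of $a\star b\star c\star d$ shows that
\begin{equation}\label{eq:pentagon}
 a\star A(b,c,d)-A(a\star b,c,d)+A(a,b\star c,d)-A(a,b,c\star d)+A(a,b,c)\star d=0 .
\end{equation}
This is purely formal and uses nothing about associativity; it is the non-associative counterpart of the Mac Lane pentagon, or the statement that the associator is a cocycle for the (formal) differential built from $\star$ itself.

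The next step is to insert (\ref{star}) and (\ref{Afgh}) into (\ref{eq:pentagon}), expand in powers of $\lambda$, and collect the coefficient of $\lambda^3$. Because $A_0=A_1=0$ by (\ref{A1}), the associator starts at order $\lambda^2$, so the $\lambda^2$ part of (\ref{eq:pentagon}) merely reproduces the cocycle condition ${\rm d}A_2=0$ with ${\rm d}$ the Hochschild coboundary (\ref{dphi}) of the pointwise product, consistent with the discussion preceding Definition~\ref{Defmono}. At order $\lambda^3$ one uses $B_0(f,g)=fg$ and $B_1(f,g)=\{f,g\}$ together with the linearity of each $A_j$ in all three slots. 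The terms carrying a factor $A_3$ then combine, exactly as in the derivation of (\ref{A3}), into the Hochschild coboundary ${\rm d}A_3(f,g,h,k)$; the remaining terms each carry one factor $A_2$ and one factor $B_1$, and, after using the antisymmetry $B_1(f,g)=-B_1(g,f)$ to rewrite the two outer terms $B_1(f,A_2(g,h,k))$ and $B_1(A_2(f,g,h),k)$, they reassemble precisely into $O(f,g,h,k)$. Hence the coefficient of $\lambda^3$ in (\ref{eq:pentagon}) is the relation
\begin{equation}\label{eq:dA3O}
 {\rm d}A_3(f,g,h,k)=O(f,g,h,k).
\end{equation}

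From (\ref{eq:dA3O}) the lemma follows at once: if $A_3$ were identically zero, then so would be ${\rm d}A_3$, and hence $O$; contrapositively, if $O\not\equiv 0$ then $A_3\not\equiv 0$. The conceptual content is the identity (\ref{eq:pentagon}); everything else is an order-by-order expansion of a formal power series. The one step that needs genuine care is the identification of the order-$\lambda^3$ remainder with $O$ in (\ref{eq:dA3O}): one must keep track of the signs of the five terms in (\ref{eq:pentagon}), of which of the three slots of $A_2$ the bracket $B_1$ lands in, and of the use of the antisymmetry of $B_1$ on the two outer terms. Note in particular that, just as the totally antisymmetric part $A_3^-$ in Lemma~\ref{lemma2} was insensitive to $B_3$, the combination ${\rm d}A_3$ does not involve $B_3$ either, so that (\ref{eq:dA3O}) constrains $A_3$ directly in terms of $A_2$ and $B_1$.
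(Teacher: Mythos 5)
Your proposal is correct and follows essentially the same route as the paper's own proof: you verify the Pentagon identity $f\star A(g,h,k)+A(f,g,h)\star k=A(f\star g,h,k)-A(f,g\star h,k)+A(f,g,h\star k)$, expand it to order $\lambda^3$ using $A_0=A_1=0$, and identify the result as ${\rm d}A_3(f,g,h,k)=O(f,g,h,k)$ (using antisymmetry of $B_1$ on the term $B_1(f,A_2(g,h,k))$), so $A_3\equiv 0$ would force $O\equiv 0$. The only additions relative to the paper are that you spell out the elementary verification of the Pentagon identity (which the paper invokes without proof, giving instead a direct computation of ${\rm d}A_3$ in its appendix as an alternative) and note the order-$\lambda^2$ consistency check ${\rm d}A_2=0$.
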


\begin{proof}
 Again, we use the Hochschild coboundary operator and
consider
\begin{eqnarray}
{\rm d}A_3(f,g,h,k)= fA_3(g,h,k) -A_3(fg,h,k) +A_3(f,gh,k)
  -A_3(f,g,hk) +kA_3(f,g,h)\,.
\end{eqnarray}
Our goal is to show that ${\rm d}A_3$ is non-zero for algebras with non-zero
$O$, which implies immediately also that $A_3\neq 0$. The Pentagon
identity
\begin{eqnarray}\label{Pentagon}
f\star A(g,h,k) +A(f,g,h)\star k=A(f\star g,h,k) -A(f,g\star h,k)
+A(f,g,h\star k)
\end{eqnarray}
for non-associative algebras can be used for a compact proof of this
statement.  Expanding it to third order in $\lambda$, we obtain
\begin{eqnarray}\label{Pentagonh3}
& &f A_3(g,h,k) +B_1(f, A_2(g,h,k)) +k A_3(f,g,h) +B_1(A_2(f,g,h), k)\nonumber\\
&=&A_3(fg,h,k) -A_3(f,gh,k) +A_3(f,g,hk) \nonumber\\
& &+A_2(B_1(f,g),h,k) -A_2(f,B_1(g,h),k) +A_2(f,g,B_1(h,k))
\end{eqnarray}
where we used $A_1=0$, cf.~eq.~(\ref{A1}).
These terms can be organized to obtain
\begin{eqnarray}\label{dA3}
{\rm d}A_3(f,g,h,k)&=&A_2(f,g,B_1(h,k)) -A_2(f,B_1(g,h),k)
+A_2(B_1(f,g),h,k)\nonumber\\
& &+B_1(A_2(g,h,k),f) -B_1(A_2(f,g,h),k)\,.
\end{eqnarray}
Alternatively, one can prove directly that ${\rm d}A_3$ is of this form
without invoking the Pentagon identity, as shown in App.~\ref{a:Pentagon}.
The right-hand side of this equation is equal to $O(f,g,h,k)$. If it is not
identically zero, $A_3$ is non-zero.
\end{proof}

We are now ready to prove our main result:
$\vspace{1.5mm}$

%\begin{proof}[of Theorem~\ref{Theorem}]

\noindent {\bf Proof} (of Theorem~\ref{Theorem}){\bf :}
  By Lemmas~\ref{lemma1} and \ref{lemma2}, a monopole star product has an
  $A_3$ with zero totally antisymmetric part. If the star product is
  alternative, we must then have $A_3=0$.  If the obstruction $O$ provided by
  Lemma~\ref{lemma3} is not identically zero, however, it is not possible that
  $A_3=0$. We now show that $O\not=0$ for a monopole star product, discussing
  two cases separately depending on whether the associator (the monopole
  density) is constant or a function of the position.

  For a constant associator, we may choose $f=p_1$, $g=p_2$, $h=p_3$ and
  $k=q_3p_3$. Using the twisted Poisson bracket for $B_1$, all but the first
  term in $O(f,g,h,k)$ are zero, while $A_2(f,g,B_1(h,k))$ is proportional to
  the monopole density and therefore non-zero.

If the monopole density is not constant, we specialize $O(f,g,h,k)$ to
\begin{equation}
 O(f,g,h,g) = A_2(B_1(f,g),h,g)- B_1(A_2(f,g,h),g)\,.
\end{equation}
Since the associator is not constant, it depends on at least one position
coordinate, say $q_1$ without loss of generality.
If we then choose $f=p_2$, $g=p_1$ and $h=p_3$ we have
$B_1(A_2(f,g,h),g)\not=0$ while $A_2(B_1(f,g),h,g)=0$. $\hfill \square$
$\vspace{1.5mm}$
%\end{proof}

The conclusion is independent of the choice of the star product within an
equivalence class, with \cite{MSS1} or \cite{WeylStar} as concrete examples,
because alternativity is independent of the choice of the ordering (the
``gauge'') \cite{WeylStarAlt2}.

More generally, Lemma~\ref{lemma3} gives us an obstruction to alternativity
which only depends on $B_1$ and $A_2$, and therefore can be tested for general
non-associative star products more easily than the full associator.

\section{Monopole Weyl star product}

Two different star products have been proposed recently for the
magnetic-monopole system, one by using the Kontsevich formula
\cite{NonGeoNonAss,MSS1,BakasLuest,MSS2,MSS3}, and one from Weyl products
\cite{WeylStar}. The former is known to be non-alternative \cite{Schupp}.
Since it satisfies our assumptions, it provides an explicit example for our
general result. We now discuss the star product of \cite{WeylStar} in more
detail.

\begin{ex}[Weyl star product]
  The star product of \cite{WeylStar} has the first coefficient
  $B_1(f,g)=\frac{1}{2}\{f,g\}$ with an atisymmetric bracket
  $\{f,g\}=\frac{1}{2}\Pi^{IJ}\partial_I f\partial_J g$ given by an arbitrary
  bivector $\Pi^{IJ}$. It can therefore be applied to monopole star
  products. The second coefficient is
\begin{equation}
 B_2(f,g)=-\frac{1}{2}\Pi^{IJ}\Pi^{KL} (\partial_I\partial_K f)
 (\partial_J\partial_L g) - \frac{1}{3}
 \Pi^{IJ}\partial_J\Pi^{KL}\left((\partial_I\partial_Kf)
   (\partial_Lg)-(\partial_Kf)(\partial_I\partial_L g)\right)\,,
\end{equation}
transferred to our notation. It obeys our assumptions. In particular, $B_2$
has no contribution of bi-differential degree $(1,1)$, and it is symmetric
thanks to the antisymmetry of the twisted Poisson tensor $\Pi^{IJ}$.
Therefore, our conditions on monopole star products are satisfied and the
algebra cannot be alternative.\footnote{This star product has been conjectured
  to be alternative in \cite{WeylStarAlt}, with a proof suggested in
  \cite{WeylStarAlt2}. However, the arguments given are not complete: They are
  based on a computation of the associator
  $A_{\xi,\eta,\zeta}:=A(\exp(i\xi\cdot z),\exp(i\eta\cdot z),\exp(i\zeta\cdot
  z))$ with phase-space variables $z$, together with a Fourier representation
  $f(z)=\int {\rm d}\mu f(\xi)\exp(i\xi\cdot z)$ of smooth functions. The
  direct calculation of $A_{\xi,\eta,\zeta}$ shows that it is zero whenever
  two of its arguments are equal. If $A_{\xi,\eta,\zeta}$ were tri-linear in
  $(\xi,\eta,\zeta)$, this fact would imply that it is antisymmetric, which
  would imply alternativity. However, $A_{\xi,\eta,\zeta}$ is not tri-linear
  in $(\xi,\eta,\zeta)$ but rather in $(\exp(i\xi\cdot z),\exp(i\eta\cdot
  z),\exp(i\zeta\cdot z))$, and antisymmetry is not implied. In fact, direct
  inspection of the result given in \cite{WeylStarAlt2} shows that
  $A_{\xi,\eta,\zeta}$ is not antisymmetric in $(\xi,\eta,\zeta)$, even though
  it is zero whenever two of its arguments are equal.}
\end{ex}

In \cite{WeylStarAlt,WeylStarAlt2}, an explicit expression for $B_3$ is given
as well. It is therefore possible to compute $A_3$ in specific examples and
show that it is not totally antisymmetric. In particular, for monopole star
products, it is not difficult to find functions $f(p_1,p_2,p_3)$ such that
$A_3(f,f,f)\not=0$.

\begin{lemma}
  Let $\star$ be a Weyl star product on $C^{\infty}(T^*{\mathbb
    R}^3)[[\lambda]]$ according to \cite{WeylStar} which quantizes a twisted
  Poisson tensor (\ref{Pi}), and let $f(p_1,p_2,p_3)$ be a function of the
  fiber coordinates of $T^*{\mathbb R}^3$ such that
  $\partial_{p_i}\partial_{p_j}f=0$ whenever $i\not= j$. The third coefficient
  of the associator of $\star$ then obeys
\begin{equation} \label{A3f}
 A_3(f,f,f) =
 \frac{4}{3}i\left(\partial_{q_1}\Pi^{p_2p_3}+ \partial_{q_2}\Pi^{p_3p_1}+
\partial_{q_3}\Pi^{p_1p_2}\right)  \sum_{\sigma\in
  Z_3}\Pi^{p_{\sigma(1)}p_{\sigma(2)}} \partial_{p_{\sigma(3)}}f
\partial_{p_{\sigma(1)}}^2f \partial_{p_{\sigma(2)}}^2f \,,
\end{equation}
summing over elements of the alternating group $A_3=Z_3$ of cyclic
permutations.
\end{lemma}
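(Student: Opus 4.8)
The plan is to compute $A_3(f,f,f)$ directly from the explicit formula \eqref{A3} for $A_3(f,g,h)$, using the known coefficients $B_1,B_2,B_3$ of the Weyl star product, and then simplify using the special structure of $f$. Since $f=f(p_1,p_2,p_3)$ with $\partial_{p_i}\partial_{p_j}f=0$ for $i\neq j$, all mixed second derivatives vanish and $f$ depends on no position coordinate; this will kill a large number of terms. In particular, any contribution in which $B_1$, $B_2$ or $B_3$ is applied with a derivative $\partial_{q_i}$ acting on an argument that is a function of the $p$'s alone drops out, and the twisted bracket $\{f,g\}=\tfrac12\Pi^{IJ}\partial_If\partial_Jg$ between two such functions reduces to $\tfrac14\Pi^{p_ip_j}\partial_{p_i}f\,\partial_{p_j}g$. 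The only surviving inhomogeneity will come from the $q$-dependence of $\Pi^{p_ip_j}=\epsilon_{jik}B^k(q)$, i.e.\ from derivatives $\partial_{q_l}\Pi^{p_ip_j}$, which by $\partial_l B^l\neq 0$ produces precisely the monopole-density prefactor $\partial_{q_1}\Pi^{p_2p_3}+\partial_{q_2}\Pi^{p_3p_1}+\partial_{q_3}\Pi^{p_1p_2}$ appearing on the right-hand side of \eqref{A3f}.

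Concretely, first I would specialize the general $A_3(f,g,h)$ of \eqref{A3} to $g=h=f$ and discard the Hochschild term ${\rm d}B_3(f,f,f)$: one should check that $\mathrm{d}B_3(f,f,f)$ vanishes (or contributes nothing antisymmetric/relevant) because $B_3$ is a bi-differential operator and $B_3(fg,h)$-type terms telescope when the two slots carry the same $p$-only function; more carefully, since $f$ is a function of $p$'s only, and $B_3$ of the Weyl product has no purely $p$-derivative piece that survives with $\partial_q\Pi$, this term will not contribute to \eqref{A3f}. Then the remaining four terms $B_2(f,B_1(f,f))-B_2(B_1(f,f),f)+B_1(f,B_2(f,f))-B_1(B_2(f,f),f)$ must be evaluated with the explicit $B_2$. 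Here the first line of $B_2$, namely $-\tfrac12\Pi^{IJ}\Pi^{KL}(\partial_I\partial_Kf)(\partial_J\partial_Lg)$, contributes via its $p$-derivative-only part and combines with the bracket $B_1(f,f)=0$ — so in fact $B_1(f,f)=\{f,f\}=0$ kills $B_2(f,B_1(f,f))$ and $B_2(B_1(f,f),f)$ outright, and only $B_1(f,B_2(f,f))-B_1(B_2(f,f),f)$ remain. So the computation reduces to evaluating $B_2(f,f)$ and then taking a single twisted bracket with $f$.

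Thus the core of the calculation is: compute $B_2(f,f)$ for $f$ with diagonal Hessian in the $p$'s, which from the given formula leaves the $-\tfrac13\Pi^{IJ}\partial_J\Pi^{KL}(\cdots)$ term as the only source of $\partial_q\Pi$ (the first term of $B_2$ is symmetric and, being a genuine second-derivative-squared expression, produces upon bracketing with $f$ only terms proportional to $\partial_q\Pi$ after one more differentiation, or cancels; one must track this honestly). Then $B_1(f,B_2(f,f))-B_1(B_2(f,f),f)=2B_1^-(f,B_2(f,f))=\{f,B_2(f,f)\}$, and expanding this bracket — where the $\partial$ hitting $B_2(f,f)$ either lands on a $\Pi$ factor (giving more $\partial_q\Pi$, higher order, or zero by $p$-only arguments) or on the $\partial_{p}^2f$, $\partial_pf$ factors — and collecting terms yields a sum over cyclic permutations $\sigma\in A_3$ of $\Pi^{p_{\sigma(1)}p_{\sigma(2)}}\partial_{p_{\sigma(3)}}f\,\partial_{p_{\sigma(1)}}^2f\,\partial_{p_{\sigma(2)}}^2f$ times the monopole prefactor. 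Finally I would fix the overall numerical constant $\tfrac43 i$ by carefully tracking the factors of $\tfrac12$ from $B_1=\tfrac12\{\cdot,\cdot\}$, the $\tfrac14$ from restricting $\Pi^{IJ}\partial_I\partial_J$ to the $p$-block, and the $\tfrac13$ in $B_2$. The main obstacle I expect is exactly this bookkeeping: showing cleanly that every term \emph{not} proportional to $\partial_{q_1}\Pi^{p_2p_3}+\partial_{q_2}\Pi^{p_3p_1}+\partial_{q_3}\Pi^{p_1p_2}$ either cancels in the antisymmetrized combination $B_1(f,B_2(f,f))-B_1(B_2(f,f),f)$ or vanishes identically because its arguments are $p$-only, and then nailing the combinatorial factor without error.
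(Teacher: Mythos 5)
There is a genuine gap, and it sits exactly where you flagged your own uncertainty: the term ${\rm d}B_3(f,f,f)$ cannot be discarded. With all three slots equal, the Hochschild term telescopes only partially: ${\rm d}B_3(f,f,f)=B_3(f,f^2)-B_3(f^2,f)$, and since $f^2$ is a different function from $f$ (in particular $f^2$ has non-vanishing mixed momentum derivatives even when $f$ has a diagonal Hessian), this is $2B_3^-(f,f^2)$, which for the Weyl product (whose odd coefficients are antisymmetric) equals $2B_3(f,f^2)\neq 0$ in general. Concretely, $B_3$ contains a $\Pi^{IJ}\Pi^{KL}\Pi^{MN}\partial_{IKM}f\,\partial_{JLN}g$ piece that survives on momentum-only arguments with the $q$-dependent components $\Pi^{p_ip_j}$, as well as pieces with one and two derivatives of $\Pi$ that survive through the constant components $\Pi^{q_ip_i}$. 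Your reduction to $B_1(f,B_2(f,f))-B_1(B_2(f,f),f)=\{f,B_2(f,f)\}$ (the vanishing of the two terms containing $B_1(f,f)=0$ is correct) therefore keeps only part of $A_3(f,f,f)$: the bracket $\{f,B_2(f,f)\}$ by itself produces, besides the wanted $\Pi\,\partial_q\Pi\,\partial f\,\partial^2f\,\partial^2f$ structures, also terms with third derivatives of $f$ (no $\partial\Pi$ at all) and terms with second derivatives of $\Pi$, neither of which appears in (\ref{A3f}); these cancel only against the $B_3(f,f^2)$ contribution. So the route as proposed would not close: you either need the explicit third-order coefficient $B_3$ of the Weyl star product, or an argument replacing it (e.g.\ associativity of the constant-$\Pi$ Moyal product disposes of the $\partial\Pi$-free terms, but still leaves the $\partial\Pi$-dependent parts of $B_3$ unaccounted for).

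For comparison, the paper does not attempt such a shortcut: it computes $A_3(f,f,f)$ for arbitrary $f$ from the full set of coefficients $B_1,B_2,B_3$ of the Weyl star product (by computer algebra), obtaining a closed general expression in which only structures $\Pi\,\partial\Pi\,\partial\Pi$ and $\Pi\,\Pi\,\partial\Pi$ remain, and only then specializes to the monopole bivector (where $\Pi$ depends on $q$ alone and $\Pi^{q_ip_j}$ is constant) and to $f(p)$ with diagonal Hessian; the index bookkeeping then yields the cyclic sum and the prefactor in (\ref{A3f}). Your overall strategy of exploiting the $p$-only, diagonal-Hessian structure of $f$ and tracing the monopole density to $\partial_{q_i}\Pi^{p_jp_k}$ is the right instinct for the final simplification step, but the proof as written is incomplete without the $B_3$ input.
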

\begin{proof}
  We have explicitly computed $A_3(f,f,f)$ for arbitrary $f$ using Cadabra
  software \cite{Cadabra1,Cadabra2}:
\begin{eqnarray}
A_3(f,f,f) &=& \frac{2i}{3}\Bigl( {\Pi}^{LM} {\partial}_{L}{{\Pi}^{NO}}\,
{\partial}_{N}{{\Pi}^{PQ}}\,  {\partial}_{M}{f}\,  {\partial}_{P}{f}\,
{\partial}_{O}\partial_{Q}{f}\,\nonumber\\
 &&  \quad-  {\Pi}^{LM}
{\partial}_{L}{{\Pi}^{NO}}\,
{\partial}_{N}{{\Pi}^{PQ}}\,  {\partial}_{O}{f}\,  {\partial}_{P}{f}\,
{\partial}_{M}\partial_{Q}{f}\,\nonumber  \\
&& \quad-2\; {\Pi}^{LM} {\Pi}^{NO} {\partial}_{L}{{\Pi}^{PQ}}\,
{\partial}_{P}{f}\,  {\partial}_M\partial_N{f}\,
{\partial}_{O}\partial_{Q}{f}\,\nonumber \\
&& \quad+  {\Pi}^{LM} {\Pi}^{NO} {\partial}_{L}{{\Pi}^{PQ}}\,
{\partial}_{M}{f}\,  {\partial}_{NP}{f}\,
{\partial}_{O}\partial_{Q}{f}\Bigr)\,. \label{A3fff}
\end{eqnarray}
For a monopole star product, the bivector $\Pi$ is a function only of the
position coordinates $q_i$ via the magnetic field. Therefore, $L$ and $N$ must
be position indices for non-zero contributions in the first two terms of
(\ref{A3fff}). These terms are then identically zero because each contains a
factor of $\partial_L\Pi^{NO}$, which is zero for a bivector of the form
(\ref{Pi}).

In the third term, only $L$ is required to be a position index, while $M$,
$N$, $O$, $P$, and $Q$ are momentum indices if $f$ depends only on
momenta. The components $\Pi^{LM}$ then equal $\delta^{LM}$
since they contain one position and one momentum index. The remaining terms in
(\ref{A3fff}) yield
\begin{eqnarray*}
\frac{3}{2i}A_3(f,f,f) &= &-2\, \Pi^{NO}({\partial}_{q_1}{{\Pi}^{UQ}}\,
{\partial}_{U}{f}\,{\partial}_{p_1}\partial_{N}{f}\,{\partial}_{O}
\partial_{Q}{f}\,  +{\partial}_{q_2}{{\Pi}^{UQ}}\,
{\partial}_{U}{f}\,{\partial}_{p_2}\partial_{N}{f}\,{\partial}_{O}
\partial_{Q}{f}\,\\
&& \qquad\quad   +{\partial}_{q_3}{{\Pi}^{UQ}}\,
{\partial}_{U}{f}\,{\partial}_{p_3}\partial_{N}{f}\,
{\partial}_{O}\partial_{Q}{f}\,) \nonumber \\
&&+{\Pi}^{NO} ({\partial}_{q_1}{{\Pi}^{UQ}}\,
{\partial}_{p_1}{f}\,  {\partial}_{N}\partial_{U}{f}\,
{\partial}_{O}\partial_{Q}{f} +{\partial}_{q_2}{{\Pi}^{UQ}}\,
{\partial}_{p_2}{f}\,  {\partial}_{N}\partial_{U}{f}\,
{\partial}_{O}\partial_{Q}{f}\\
&&\qquad\quad + {\partial}_{q_3}{{\Pi}^{UQ}}\,
{\partial}_{p_3}{f}\,  {\partial}_{N}\partial_{U}{f}\,  {\partial}_{O}
  \partial_{Q}{f}) \label{A3fffsimplified}
\end{eqnarray*}

We collect terms with the same factor of $\partial_{q_i}\Pi^{IJ}$ from
derivatives of the bivector. Such a contribution with $\partial_{q_1}\Pi^{IJ}$
is of the form
\begin{eqnarray*}
&&\Pi^{NO}\left(-2{\partial}_{q_1}{{\Pi}^{UQ}}\,
{\partial}_{U}{f}\,{\partial}_{p_1}\partial_{N}{f}\,{\partial}_{O}
\partial_{Q}{f}+ {\partial}_{q_1}{{\Pi}^{UQ}}\,
{\partial}_{p_1}{f}\,  {\partial}_{N}\partial_{U}{f}\,
{\partial}_{O}\partial_{Q}{f}\right)\\
&=& \Pi^{NO}\bigl( \partial_{p_1}f \left(-\partial_{q_1}\Pi^{p_1 Q}
    {\partial}_{p_1}\partial_{N}{f} +
    {\partial}_{q_1}{{\Pi}^{p_2Q}}\,
 {\partial}_{N}\partial_{p_2}{f}+ {\partial}_{q_1}{{\Pi}^{p_3Q}}\,
  {\partial}_{N}\partial_{p_3}{f}\right)\\
&& -2\partial_{p_2}f {\partial}_{q_1}{{\Pi}^{p_2Q}}\,
{\partial}_{p_1}\partial_{N}{f}-
2\partial_{p_3}f {\partial}_{q_1}{{\Pi}^{p_3Q}}\,
{\partial}_{p_1}\partial_{N}{f}\bigr){\partial}_{O}
\partial_{Q}{f}\,,
\end{eqnarray*}
arranging by factors of first-order derivatives $\partial_{p_i}f$. By our
assumptions on $f$, the index $N$ is determined in all terms for non-zero
contributions and we obtain
\begin{eqnarray*}
 && \Bigl(\partial_{p_1}f \left(-\Pi^{p_1 O}\partial_{q_1}\Pi^{p_1 Q}
    {\partial}_{p_1}^2{f} + \Pi^{p_2 O}
    {\partial}_{q_1}{{\Pi}^{p_2Q}}\,
 \partial_{p_2}^2{f}+ \Pi^{p_3O}{\partial}_{q_1}{{\Pi}^{p_3Q}}\,
  \partial_{p_3}^2{f}\right)\\
&& -2\partial_{p_2}f \Pi^{p_1O}{\partial}_{q_1}{{\Pi}^{p_2Q}}\,
{\partial}_{p_1}^2{f}-
2\partial_{p_3}f \Pi^{p_1O}{\partial}_{q_1}{{\Pi}^{p_3Q}}\,
{\partial}_{p_1}^2{f}\bigr)\Bigr){\partial}_{O}
\partial_{Q}{f}\\
&=& \sum_O \left(\partial_{p_1}f \left(-\Pi^{p_1 O}\partial_{q_1}\Pi^{p_1 O}
    {\partial}_{p_1}^2{f} + \Pi^{p_2 O}
    {\partial}_{q_1}{{\Pi}^{p_2O}}\,
 \partial_{p_2}^2{f}+ \Pi^{p_3O}{\partial}_{q_1}{{\Pi}^{p_3O}}\,
  \partial_{p_3}^2{f}\right)\right.\\
&& \left.-2\partial_{p_2}f \Pi^{p_1O}{\partial}_{q_1}{{\Pi}^{p_2O}}\,
{\partial}_{p_1}^2{f}-
2\partial_{p_3}f \Pi^{p_1O}{\partial}_{q_1}{{\Pi}^{p_3O}}\,
{\partial}_{p_1}^2{f}\bigr)\partial_{O}^2f\right)
\end{eqnarray*}
setting $O=Q$ in the last step, again by our assumptions on $f$. We now go
through all remaining choices of the only free index $O$. All contributions to
terms containing $\partial_{q_1}\Pi^{p_1O}$ cancel out. We arrive at
\begin{eqnarray*}
 && 2\partial_{p_1}f
 \Pi^{p_2p_3}\partial_{q_1}\Pi^{p_2p_3} \partial_{p_2}^2f \partial_{p_3}^2f-
 2\partial_{p_2}f
 \Pi^{p_1p_3}\partial_{q_1}\Pi^{p_2p_3} \partial_{p_1}^2f\partial_{p_3}^2f  -
 2\partial_{p_3}f
 \Pi^{p_1p_2}\partial_{q_1}\Pi^{p_3p_2} \partial_{p_1}^2f\partial_{p_2}^2f\\
&=& 2\partial_{q_1}\Pi^{p_2p_3}
\sum_{\sigma\in
  Z_3}\Pi^{p_{\sigma(1)}p_{\sigma(2)}} \partial_{p_{\sigma(3)}}f
\partial_{p_{\sigma(1)}}^2f \partial_{p_{\sigma(2)}}^2f \,.
\end{eqnarray*}
Bringing back contributions with the remaining $\partial_{q_i}\Pi^{IJ}$, we
have (\ref{A3f}).
\end{proof}

For specific choices of $f$ obeying the condition stated in the Lemma, we can
compute $A_3(f,f,f)$ more explicitly. The first parenthesis in (\ref{A3f}) is
half the Jacobiator of the bivector, which is non-zero for a monopole star
product. The sum over cyclic permutations depends on the specific $f$.

\begin{ex} Let $\Pi$ be a bivector as stated in the conditions on a monopole
  star product.
\begin{enumerate}
\item Let $f=|p|^2=p_1^2+p_2^2+p_3^2$. We have
\[
 \sum_{\sigma\in
  Z_3}\Pi^{p_{\sigma(1)}p_{\sigma(2)}} \partial_{p_{\sigma(3)}}f
\partial_{p_{\sigma(1)}}^2f \partial_{p_{\sigma(2)}}^2f = 8\sum_{\sigma\in
  Z_3}\Pi^{p_{\sigma(1)}p_{\sigma(2)}} p_{\sigma(3)}\,.
\]
With a bivector as implied by (\ref{pp}),
\begin{equation}
 A_3(|p|^2,|p|^2,|p|^2) = \frac{32}{3}i (p\cdot B)\: {\rm div}B\,.
\end{equation}
For a monopole star product, ${\rm div}B\not=0$, and $p\cdot B$ is generically
non-zero for a charged particle with momentum $p$ moving in the magnetic field
$B$. Therefore, the a monopole star product obtained from a Weyl star product
cannot be alternative to third order in $\lambda$.
\item Another example in which (\ref{A3f}) can be used is $f=e^{i\alpha_1
    p_1}+e^{i\alpha_2 p_2}+e^{i\alpha^3p_3}$ for
  $(\alpha_1,\alpha_2,\alpha_3)\in{\mathbb R}^3$, a family of bounded
  functions. The sum over cyclic permutations then equals
\[
\sum_{\sigma\in
  Z_3}\Pi^{p_{\sigma(1)}p_{\sigma(2)}} \partial_{p_{\sigma(3)}}f
\partial_{p_{\sigma(1)}}^2f \partial_{p_{\sigma(2)}}^2f=
i\alpha_1^2\alpha_2^2\alpha_3^2 \left(\frac{\Pi^{p_1p_2}}{\alpha_3}+
  \frac{\Pi^{p_2p_3}}{\alpha_1}+ \frac{\Pi^{p_3p_1}}{\alpha_2}\right)
e^{i(p_1+p_2+p_3)}\,.
\]
For a bivector as in (\ref{pp}), we have
\begin{eqnarray} \label{A3bounded}
&& A_3(e^{ip_1}+e^{ip_2}+e^{ip_3},
    e^{ip_1}+e^{ip_2}+e^{ip_3}, e^{ip_1}+e^{ip_2}+e^{ip_3})\\
& =&    -\frac{4}{3}\alpha_1^2\alpha_2^2\alpha_3^2
    e^{i(\alpha_1p_1+\alpha_2p_2+\alpha_3p_3)}
    \left(\frac{B^1}{\alpha_1}+\frac{B^2}{\alpha_2}+\frac{B^3}{\alpha_3}\right)
{\rm div}B\,. \nonumber
\end{eqnarray}
For any non-zero $B$, there is a triple $(\alpha_1,\alpha_2,\alpha_3)$ such
that $B^1/\alpha_1+B^2/\alpha_2+B^3/\alpha_3$ is not identically
zero. Therefore, every magnetic field with non-zero divergence gives rise to
an $f$ with $A_3(f,f,f)\not=0$.
\end{enumerate}
\end{ex}

The Lemma implies non-alternativity of monopole star products obtained from a
Weyl star product quantizing (\ref{Pi}), but this already follows from
Theorem~\ref{Theorem}. Having explicit examples with $A_3(f,f,f)\not=0$
implies further results.

A property weaker than alternativity is \emph{flexibility}, for which, by
definition, only anti-symmetry with respect to the first and third entry is
required:
\begin{equation}
 A(f,g,h)=-A(h,g,f) \, .
\end{equation}
Flexibility is important for quantum mechanics because it is a necessary and
sufficient condition \cite{Okubo} for the commutator
\begin{equation}
 [f,g]=f\star g-g\star f
\end{equation}
to be a derivation of the Jordan product
\begin{equation}
 f\circ g:=\frac{1}{2} (f\star g-g\star f)\,.
\end{equation}
Heisenberg equations of motion
\begin{equation}
 \frac{{\rm d}f}{{\rm d}t} = \frac{[f,H]}{i\hbar}
\end{equation}
with a Hamiltonian $H$ then obey a product rule of the form
\begin{equation}
 \frac{{\rm d}(f\circ g)}{{\rm d}t} = \frac{{\rm d}f}{{\rm d}t}\circ g+ f\circ
 \frac{{\rm d}g}{{\rm d}t}\,.
\end{equation}
To second order in $\lambda$, flexibility of the associator follows from
(\ref{A2B}) for any star product with symmetric $B_2$.  However, as with
alternativity, this fact does not guarantee that flexibility is realized at
higher orders.

Another condition weaker than alternativity is {\em power-associativity:} A
power-associative algebra is defined as an algebra ${\cal A}$ such that the
subalgebra generated by any single element $a\in{\cal A}$ is associative. For
any positive integer $n$, the $n$-th power $a^n$ is then uniquely defined even
though the algebra product may be non-associative. For Weyl star products of
monopole systems, we have

\begin{theo}
  A Weyl star product which quantizes (\ref{Pi}) with ${\rm div}B\not=0$
  cannot be flexible or power associative.
\end{theo}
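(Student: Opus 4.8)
The plan is to collapse both properties onto the single scalar condition $A(f,f,f)=0$ and then contradict it with the explicit third-order formula established in the Lemma above. Flexibility, $A(f,g,h)=-A(h,g,f)$, with all three arguments set equal gives $A(f,f,f)=-A(f,f,f)$, hence $A(f,f,f)=0$ for every $f$. Power-associativity forces the subalgebra generated by a single element $f$ to be associative, so in particular $f\star(f\star f)=(f\star f)\star f$, which again reads $A(f,f,f)=0$ for every $f$. Thus it suffices to exhibit one function $f\in C^\infty(T^*\mathbb{R}^3)$ with $A(f,f,f)\neq 0$ as an element of $C^\infty(T^*\mathbb{R}^3)[[\lambda]]$.

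Next I would determine the order in $\lambda$ at which $A(f,f,f)$ can first be non-zero. One always has $A_0=0$ and $A_1=0$, cf.~(\ref{A1}). For a Weyl star product the second coefficient $B_2$ is symmetric --- as already noted, a consequence of the antisymmetry of $\Pi^{IJ}$ --- and symmetry of $B_2$ renders the second-order associator flexible, $A_2(f,g,h)=-A_2(h,g,f)$, by (\ref{A2B}); hence $A_2(f,f,f)=0$. Therefore $A(f,f,f)=\lambda^3A_3(f,f,f)+O(\lambda^4)$, and the requirement that $A(f,f,f)$ vanish as a formal power series is equivalent, at the lowest surviving order, to $A_3(f,f,f)=0$.

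Finally I would invoke the Lemma together with the Example that follows it. For any $f=f(p_1,p_2,p_3)$ with $\partial_{p_i}\partial_{p_j}f=0$ whenever $i\neq j$, the coefficient $A_3(f,f,f)$ is given by (\ref{A3f}); its first factor $\partial_{q_1}\Pi^{p_2p_3}+\partial_{q_2}\Pi^{p_3p_1}+\partial_{q_3}\Pi^{p_1p_2}$ is, up to a constant, the Jacobiator of $\Pi$, which is non-zero precisely when ${\rm div}B\neq 0$. Choosing $f=|p|^2$ yields $A_3(f,f,f)=\frac{32}{3}i\,(p\cdot B)\,{\rm div}B$, which is not the zero function whenever ${\rm div}B$ does not vanish identically; alternatively the bounded family $f=e^{i\alpha_1p_1}+e^{i\alpha_2p_2}+e^{i\alpha_3p_3}$ gives (\ref{A3bounded}), which can be made non-zero for a suitable choice of $(\alpha_1,\alpha_2,\alpha_3)$ whenever $B$ is not identically zero. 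In either case $A_3(f,f,f)\neq 0$, contradicting $A(f,f,f)=0$; hence a Weyl star product quantizing (\ref{Pi}) with ${\rm div}B\neq 0$ can be neither flexible nor power-associative.

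The only step requiring real care is the middle one: one must verify that $A_2(f,f,f)$ genuinely vanishes, so that the obstruction truly sits at order $\lambda^3$ and a single non-vanishing value of $A_3(f,f,f)$ kills flexibility and power-associativity simultaneously. Everything else is a direct appeal to the Lemma and the Example already proved, so I do not anticipate a serious difficulty.
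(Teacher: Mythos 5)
Your proposal is correct and follows essentially the same route as the paper: both flexibility and power-associativity force $A(f,f,f)=0$, which is contradicted by the Lemma's explicit result $A_3(f,f,f)\neq 0$ (e.g.\ for $f=|p|^2$). The only difference is your careful verification that $A_2(f,f,f)=0$; this is harmless (and matches the paper's expansion $f\star(f\star f)-(f\star f)\star f=A_3(f,f,f)\lambda^3+O(\lambda^4)$) but not strictly necessary, since vanishing of the full formal power series $A(f,f,f)$ already forces every coefficient, in particular $A_3(f,f,f)$, to vanish.
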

\begin{proof}
  Since there is an $f$ such that $A_3(f,f,f)\not=0$, the associator cannot be
  antisymmetric in its first and last arguments. Moreover, we have $f\star
  (f\star f)-(f\star f)\star f=A_3(f,f,f)\lambda^3+O(\lambda^4)$ and the
  subalgebra generated by $f$ cannot be associative.
\end{proof}

\section{Conclusions}

We have shown that, under rather weak conditions, star products that quantize
the phase space of a charged particle in the presence of a magnetic monopole
density cannot be alternative. More generally, we have provided obstructions
for a non-associative star product with symmetric $B_2$ being alternative. By
the non-associative Gelfand--Naimark theorem \cite{NonAssNormed}, this result,
together with the fact that the algebra is unital, implies that there is no
norm that would turn the quantum algebra into a $C^*$-algebra, even if the
algebra can be restricted to bounded functions; see (\ref{A3bounded}). This
version of our result strengthens the usual statement that non-associative
systems cannot be quantized in the standard way by representing observables on
a Hilbert space. One way to circumvent the use of Hilbert spaces in
associative systems is to take an algebraic view point and define quantum
states as positive linear functionals on the $C^*$-algebra of bounded
observables; see for instance \cite{LocalQuant}. For non-associative systems
of the kind studied here, this route must be generalized because the
star-product algebra cannot be turned into a $C^*$-algebra. One can still use
positive linear functionals, but only on a $*$-algebra.

Non-alternativity rules out the use of octonions as realizations of
observable algebras of the relevant physical systems. Recently, in
\cite{NonAssOcto}, octonions have been used to realize the relations
(\ref{pp}) and (\ref{Asso}) for linear functions of the momentum
components. An extension to non-linear functions would encounter the same
obstructions found here for star products, and a purely octonionic
construction would no longer suffice.

\section*{Acknowledgements}
We thank Peter Schupp and Stefan Waldmann for useful discussions.  This work
was supported in part by NSF grants PHY-1307408 and PHY-1607414. The work of
T.S.~was additionally supported in part by Projeto P.V.E. 88881.030367/
2013-01 (CAPES/Brazil), the project MODFLAT of the European Research Council
(ERC), and the NCCR SwissMAP of the Swiss National Science Foundation.

\begin{appendix}
\section{Details on the derivation of eq.~(\ref{A3final})}
\label{a:A3}

Starting from (\ref{A3}), and using all its cyclic permutations, we can write
the fully anti-symmetric part of $A_3$ as
\begin{eqnarray}
6A_3(f,g,h)^-&=&B_2(f, B_1(g,h)) -B_2(B_1(f,g), h) +B_2(h,
B_1(f,g)) \\
& &-B_2(B_1(h,f) g)+B_2(g, B_1(h,f)) -B_2(B_1(g,h), f)\nonumber\\
& &-B_2(f, B_1(h,g)) +B_2(B_1(f,h), g) - B_2(g, B_1(f,h))\nonumber\\
& &+B_2(B_1(g,f), h) -B_2(h, B_1(g,f)) +B_2(B_1(h,g), f) +(B_1 \leftrightarrow
B_2)\,.\nonumber
\end{eqnarray}
Using the definition of the anti-symmetric parts of the $B_i$, we have
\begin{eqnarray}
6A_3(f,g,h)^-&=&2B_2^-(f, B_1(g,h)) +2B_2^-(h, B_1(f,g))
+2B_2^-(g, B_1(h,f))\\
& &-2B_2^-(f, B_1(h,g)) -2B_2(g, B_1(f,h)) -2B_2^-(h, B_1(g,f)) +(B_1
\leftrightarrow B_2)\,.\nonumber
\end{eqnarray}
Finally, using the fact that the $B_i$ are linear in their arguments, we
obtain the required form for the fully anti-symmetric part of $A_3$ as in
(\ref{A3final}).

\section{Proof of Lemma \ref{lemma3} without Pentagon identity}
\label{a:Pentagon}
To begin with, let us write the third-order associator as before:
\begin{eqnarray} \label{A32}
A_3(f, g, h) &=& {\rm d}B_3(f, g, h) + B_2(f, B_1(g,h))\nonumber\\ &
&\hspace{3mm}-B_2(B_1(f,g), h) + B_1(f, B_2(g,h)) - B_1(B_2(f,g), h)\,,
\end{eqnarray}
where ${\rm d}B_n=f B_n(g, h) +B_n(f, gh) -hB_n(f,g) - B_n(fg, h)$.If we apply
the Hochschild coboundary operator to $A_3$, the first term in (\ref{A32})
should give zero because ${\rm d}^2=0$. (Again, when applied to coefficients
in an $\lambda$-expansion of a non-assocative star product, only the associative
multiplication of smooth functions is used in the definition of ${\rm d}$.)
However, for completeness we will explicitly show this. The part in ${\rm
  d}A_3(f,g,h,k)$ involving contributions only from the $B_3$ terms has the
form
\begin{eqnarray}
 f\,{\rm d}B_3(g,h,k) -{\rm d}B_3(fg,h,k)+{\rm d}B_3(f,gh,k)-{\rm
   d}B_3(f,g,hk)+k\,{\rm d}B_3(f,g,h)\,.
\end{eqnarray}
Using the definition of ${\rm d}B_n$ for $n=3$ gives
\begin{eqnarray}
 &&f\bigg(g\,B_3(h,k)+B_3(g,hk)-k\,B_3(g,h)-B_3(gh,k)\bigg)\nonumber\\
&&-\bigg(f\,g\,B_3(h,k)+B_3(fg,hk)-k\,B_3(fg,h)-B_3(fgh,k) \bigg)\nonumber \\
&&+\bigg(f\,B_3(gh,k)+B_3(f,ghk)-k\,B_3(f,gh)-B_3(fgh,k)\bigg)\nonumber\\
&&-\bigg(f\,B_3(g,hk)+B_3(f,ghk)-h\,k\,B_3(f,g)-B_3(fg,hk)\bigg)\nonumber\\
&&+k\bigg(f\,B_3(g,h)+B_3(f,gh)-h\,B_3(f,g)-B_3(fg,h)\bigg)\,.\nonumber
\end{eqnarray}
Upon a close inspection of this expression, we see that there is a counterterm
for each term, and thus it is zero. We are left with the action of the
coboundary operator on the last four terms in (\ref{A32}). Concentrating, for
now, on its action on the $B_2$ terms, using the generic definition of ${\rm
  d}B_n$ for $n=2$, we obtain a part in ${\rm d}A_3(f,g,h,k)$ that is of the
form:
\begin{eqnarray}
 &&-f\bigg(\,B_2(g,B_1(h,k))-\,B_2(B_1(g,h),k)\bigg)\nonumber \\
&&-B_2(fg,B_1(h,k))+B_2(B_1(fg,h),k) \nonumber \\
&& +B_2(f,B_1(gh,k))-B_2(B_1(f,gh),k)\nonumber \\
&&-B_2(f,B_1(g,hk))+B_2(B_1(f,g),hk) \nonumber \\
&&+k\bigg( B_2(f,B_1(g,h))-B_2(B_1(f,g),h)\bigg)\,. \label{A33}
\end{eqnarray}
Using the Leibniz property of $B_1$, and removing terms that identically
cancel out, we are left with
\begin{eqnarray}
 &&-f\,B_2(g,B_1(h,k))-f\,B_2(B_1(g,h),k)-B_2(fg,B_1(h,k))\nonumber\\
&&+B_2(fB_1(g,h),k)+B_2(f,g
 B_1(h,k))-B_2(hB_1(f,g),k)\nonumber \\
&&-B_2(f,kB_1(g,h))+B_2(B_1(f,g),hk)+k\,B_2(f,B_1(g,h))-
k\,B_2(B_1(f,g),h)\,.\nonumber
\end{eqnarray}
This expression can be cast into a more succinct form in terms of ${\rm
  d}A_2$, by adding and subracting a few terms as follows:
\begin{eqnarray} \label{A34}
&&{\rm d}B_2(f,g,B_1(h,k))-{\rm d}B_2(f,B_1(g,h),k)+{\rm
  d}B_2(B_1(f,g),h,k)\\
&&+B_1(h,k)B_2(f,g)-B_2(h,k)B_1(f,g) \,. \nonumber
\end{eqnarray}
The action of the differential on the $B_1$ terms in (\ref{A32}) gives an
expression similar to (\ref{A33}), with the roles of $B_1$ and $B_2$
exchanged. Again upon using the Leibniz property of $B_1$ and cancelling
terms, we have the contribution to ${\rm d}A_3$ as
\begin{eqnarray}
&&-f\,B_1(B_2(g,h),k)-g\,B_1(f,B_2(h,k))+B_1(B_2(fg,h),k)+B_1(f,B_2(gh,k))
\nonumber
\\
&&-B_1(B_2(f,gh),k)-B_1(f,B_2(g,hk))+h\,B_1(B_2(f,g),k)+k\,B_1(f,B_2(g,h))\,.
\nonumber
\end{eqnarray}
Using anti-symmetry and linearity in either of the arguments of $B_1$, and
again adding and subtracting a few terms, we introduce
${\rm d}B_2$ as
\begin{eqnarray} \label{A35}
 B_1({\rm d}B_2(g,h,k),f)-B_1({\rm
   d}B_2(f,g,h),k)-B_2(f,g)B_1(h,k)+B_2(h,k)B_1(f,g) \,.
\end{eqnarray}
As the final result, (\ref{A34}) and (\ref{A35}) give
\begin{eqnarray}
 {\rm d}A_3(f,g,h,k) &=& {\rm d}B_2(f,g,B_1(h,k))-{\rm
   d}B_2(f,B_1(g,h),k)+{\rm d}B_2(B_1(f,g),h,k)\nonumber \\
&&+B_1({\rm d}B_2(g,h,k),f)-B_1({\rm d}B_2(f,g,h),k)\,.
\end{eqnarray}
To get the same result as in (\ref{dA3}), which was obtained using the
Pentagon identity, we just use the definition of ${\rm d}B_2$ in terms of the
second-order associator as ${\rm d}B_2(f,g,h) = A_2(f,g,h)
-B_1(f,B_1(g,h))+B_1(B_1(f,g),h)$, and use the linearity of $B_1$ in its first
argument in the last two terms.

\end{appendix}

%\bibliographystyle{plain}
%\bibliography{../Bib/QuantGra}

\newcommand{\noopsort}[1]{}

\end{document}